\newtheorem{obs}[theorem]{Observation}
\newtheorem{lem}[theorem]{Lemma}
\newtheorem{thm}[theorem]{Theorem}
\def\be{\begin{equation}}
\def\ee{\end{equation}}
\def\bea{\begin{eqnarray}}
\def\eea{\end{eqnarray}}
\def\nn{\nonumber}
\def\ba{\begin{array}}
\def\ea{\end{array}}
\def\reff#1{(\ref{#1})}
\begin{document}

\title{Approximate the individually fair $k$-center with outliers}


\titlerunning{Approximate the individually fair $k$-center with outliers}        

\author{Lu Han \and  Dachuan Xu \and \\ Yicheng Xu\footnote{Corresponding author. Email:{yc.xu@siat.ac.cn}} \and  Ping Yang
}

\authorrunning{L. Han, D. Xu, Y. Xu, and P. Yang} 

\institute{L. Han \at School of Science, Beijing University of Posts and Telecommunications, Beijing 100876, P.R. China.
\and D. Xu \at Beijing Institute for Scientific and Engineering Computing, Beijing University of Technology, Beijing 100124, P.R. China.
\and Y. Xu  \at Shenzhen Institute of Advanced Technology, Chinese Academy of Sciences, Shenzhen 518055, P.R. China.
\and P. Yang \at College of Statistics and Data Science, Beijing University of
Technology, Beijing 100124, P.R. China.}
\date{Received: date / Accepted: date}
\maketitle

\begin{abstract}
In this paper, we propose and investigate the individually fair $k$-center with outliers (IF$k$CO). In the IF$k$CO, we are given an $n$-sized vertex set in a metric space,
as well as integers $k$ and $q$. At most $k$ vertices can be selected as the centers and at most $q$ vertices can be selected as the outliers. The centers are selected to serve all the not-an-outlier (i.e., served) vertices. The so-called individual fairness constraint restricts that every served vertex must have a selected center not too far way. More precisely, it is supposed that there exists at least one center among its  $\lceil (n-q) / k \rceil$ closest neighbors for every served vertex. Because every center serves $(n -q) / k$ vertices on the average.
The objective is to select centers and outliers, assign every served vertex to some center, so as to minimize the maximum fairness ratio over all served vertices, where the fairness ratio of a vertex is defined as the ratio between its distance with the assigned center and
its distance with a $\lceil (n - q )/k \rceil_{\rm th}$ closest neighbor. As our main contribution, a 4-approximation algorithm is presented, based on which we develop an improved algorithm
from a practical perspective.
Extensive experiment results on both synthetic datasets and real-world datasets are presented to illustrate the effectiveness of the proposed algorithms.

\keywords{$k$-center \and Individual fairness \and Outliers \and Approximation algorithm }
\end{abstract}

\section{Introduction}
Clustering problems are studied due to their widespread
applications in operations research and machine learning areas \cite{bprst,cgts,g,hs85,hs86,l,sta}. As a consequence, some natural and significant variants also attract lots of research interests \cite{c,kps,ks,kls,xxdw,xxzz}.

The concept of fairness is introduced into clustering problems very recently.  Chierichetti et al. \cite{cklv} first studies the fairness in the sense that each cluster is required to have approximately equal proportion of representations. Many more explanation of fairness in the clustering problems are proposed since then.
They vary from each other in considering fairness in different objects. Some consider the balance between clusters \cite{aekm,biosvw,bcfn,bgkkrss,hjv,sss}, some consider the balance within selected centers \cite{jnn,kam} and others consider the balance of cost functions \cite{abv,gsv,mv}. All these fairness can be viewed as the so-called group fairness, and very limited work concentrates on the  individual fairness.

The individual fairness is proposed by Jung et al. \cite{jkl} in the sense of population density. They study the individually fair $k$-center (IF$k$C) where an $n$-sized vertex set in a metric space and an integer $k$ are given. At most $k$ vertices can be selected as the centers to serve all the given vertices. A vertex would expect that there exists a center among its $\lceil n / k \rceil$  closest neighbors, since  each open center serves $n/k$ vertices on the average. Jung et al. \cite{jkl} show that sometimes it is impossible to find the suitable centers which satisfy the expectation of each vertex. So the IF$k$C focuses on optimizing how far from the ideal expectation. Specifically, the objective of the IF$k$C is to select at most $k$ vertices as centers, and assign each vertex to some center, so as to minimize the maximum fairness ratio over all vertices, where the fairness ratio of a vertex is defined as the ratio between its distance with the assigned center and its distance with a $\lceil n /k \rceil_{\rm th}$ closest neighbor.
Jung et al. \cite{jkl} give a $4$-approximation algorithm for the IF$k$C. Soon afterwards, under the notion of individual fairness in \cite{jkl}, Mahabadi and Vakilian \cite{mvi} and Vakilian and  Yal{\c{c}}{\i}ner \cite{vy} study the $k$-clustering with $l_p$-norm cost function.

However, an isolated vertex may cause huge loss of the overall clustering quality in the IF$k$C. It is of significance to overcome this shortcoming of the problem. Towards this end, we introduce the individually fair $k$-center with outliers (IF$k$CO) in this paper, which allows some vertices, called outliers, to be discarded when clustering. Thus, an additional integer $q$ is given.
At most $q$ vertices can be selected as the outliers which could not be served. If a vertex is selected as an outlier, it does not care the distance between a center and itself. If a vertex is not an outlier, it would expect that there exists a center among its $\lceil (n-q) / k \rceil$ closest neighbors, since ideally we wish that each center serves $(n -q) / k$ vertices. The goal is to select at most $k$ centers and at most $q$ outliers, assign each not-an-outlier vertex to some center, so as to minimize the maximum outlier-related fairness ratio over all not-an-outlier vertices, where the outlier-related fairness ratio of a vertex is defined as the ratio between its distance with the assigned center and its distance with a $\lceil (n-q) /k \rceil_{\rm th}$ closest neighbor.
Our contributions are fourfold.
\begin{itemize}
\item Contribution 1: We first present a naive but natural algorithm for the IF$k$CO and prove that the algorithm may return a solution far from being optimal.

\item Contribution 2: After finding out the naive algorithm's principle of selecting centers is lack of rationality, we then design a basic $4$-approximation algorithm for the IF$k$CO, which successfully avoids the shortcoming of the naive algorithm.

\item Contribution 3: Unfortunately, the basic algorithm has its own limitation that it may select very few vertices as outliers. We further propose a refined $4$-approximation algorithm to deal with the limitation.

\item Contribution 4: We apply the refined algorithm to several instances and show that the refined algorithm is well-behaved.

\end{itemize}

The remainder of this paper is structured as follows. In section 2, the mathematical description of the IF$k$CO is given, followed by a naive algorithm. In section 3, the main part of this paper, we present two algorithms for the IF$k$CO, a basic one and a refined one. In section 4, we test the refined algorithm on a large scale of synthetic and real-world instances. In section 5, we discuss the practical aspect of the proposed algorithms as well as some interesting directions.

\section{Preliminaries}
We start with the mathematical  descriptions of the IF$k$CO and IF$k$C. A naive attempt show that the algorithm for the IF$k$C can easily obtain a feasible solution for the IF$k$CO instance. However, it can be arbitrarily bad.

\subsection{Problem descriptions}
In any instance for the IF$k$CO, denoted by $\mathcal{I}_{{\rm IF}k{\rm CO}}$, we are given a vertex set $V$ with size $n$. Let $d_{ij}$ be the distance between a pair of vertices $(i, j)$ with $i, j \in V$. It is assumed that the distances are metric, i.e., obey the following assumptions.
\begin{itemize}
\item  They are \emph{non-negative}, i.e., $d_{ij} \geq 0$ for any $i, j \in V$;
\item They are \emph{symmetric}, i.e., $d_{ii}=0$ and $d_{ij} = d_{ji}$ for any $i, j \in V$;
\item  They satisfy the \emph{triangle inequality}, i.e., $d_{hi} + d_{ij} \geq d_{hj}$ for any $h, i, j \in V$.
\end{itemize}
Also, we are given the integers $k$  and $q$, the maximum number of vertices that can be selected as the centers and that of the outliers. For each $i \in V$,
let $NR_q(i)$ be the distance between $i$ and its $\lceil (n - q )/k \rceil_{\rm th}$ nearest neighbor. Note that any vertex itself is its nearest neighbor.
We call $NR_q(i)$ the outlier-related neighborhood radius of $i$.
The aim is to select  vertices $S \subseteq V$ as centers and $O \subseteq V$ as outliers, assign each vertex $i \in V \setminus O$ to some center $\sigma(i) \in S$, such that $|S| \leq k$, $|O|\leq q$, and the maximum ratio of $d_{\sigma(i)i} / NR_q(i)$ of a vertex in $V \setminus O$ is minimized.

We use $(S, O, \sigma)$ to denote a solution for the IF$k$CO instance $\mathcal{I}_{{\rm IF}k{\rm CO}}$, in which $S \subseteq V$ is the set of selected centers, $O \subseteq V$ is the set of selected outliers and $\sigma: V \setminus O \rightarrow S $ is an assignment mapping each vertex in $V \setminus O$ to some center in $S$. A solution $(S, O, \sigma)$ is feasible if $|S| \leq k$ and $|O| \leq q$. For each vertex  $i \in V \setminus O$, we call $d_{\sigma(i)i} / NR_q(i)$ its outlier-related fairness ratio. For the solution $(S, O, \sigma)$, we call $\alpha(S, O, \sigma)$ its outlier-related fairness ratio, which is
the maximum outlier-related fairness ratio of a vetex in $V \setminus O$, i.e.,
$$\alpha(S, O, \sigma) = \max \limits_{i \in V \setminus O} \frac{d_{\sigma(i)i}}{NR_q(i)}.$$
Denote by $(S^*, O^*, \sigma^*)$ the optimal solution for  $\mathcal{I}_{{\rm IF}k{\rm CO}}$, and  $OPT_{\mathcal{I}_{{\rm IF}k{\rm CO}}}$ the outlier-related fairness ratio of $(S^*, O^*, \sigma^*)$, i.e.,
$$(S^*, O^*, \sigma^*) = \arg\min \limits_{(S, O, \sigma) : |S| \leq k, |O| \leq q} \alpha (S, O, \sigma),$$
$${\rm and} ~ OPT_{\mathcal{I}_{{\rm IF}k{\rm CO}}} = \alpha(S^*, O^*, \sigma^*) = \max \limits_{i \in V \setminus O^*} \frac{d_{\sigma^*(i)i}}{NR_q(i)}.$$

By setting $q=0$, the  $\mathcal{I}_{{\rm IF}k{\rm CO}}$ reduces to an IF$k$C instance. More specifically, in an IF$k$C instance $\mathcal{I}_{{\rm IF}k{\rm C}}$,
we are given a vertex set $V$. Each pair of vertices $(i, j)$, where $i, j \in V$, has a distance $d_{ij}$. We assume that the distances are non-negative, symmetric, and satisfy the triangle inequality. Also, we are given an integer $k$, the maximum number of vertices that can be selected as the centers.
For each vertex $i \in V$,
let $NR(i)$ be the distance between $i$ and its $\lceil {n }/{k} \rceil_{\rm th}$ nearest neighbor. We call $NR(i)$ the neighborhood radius of $i$. The goal is to select  vertices $S \subseteq V$ as centers, assign each vertex $i \in V$ to some center $\sigma(i) \in S$, such that $|S| \leq k$, and the maximum ratio of $d_{\sigma(i)i}/NR(i)$ of a vertex in $V$ is minimized.

We use $(S, \sigma)$ to denote a solution for the IF$k$C instance $\mathcal{I}_{{\rm IF}k{\rm C}}$, in which $S \subseteq V$ is the set of selected centers and $\sigma: V  \rightarrow S $ is an assignment mapping each vertex in $V$ to some center in $S$. A solution $(S, \sigma)$ is feasible if $|S| \leq k$. For each vertex  $i \in V$, we call $d_{\sigma(i)i}/NR(i)$ its fairness ratio. For the solution $(S, \sigma)$, we call $\alpha(S, O)$ its fairness ratio, which is the maximum fairness ratio of a vetex in $V$, i.e.,
$$\alpha(S, \sigma) = \max \limits_{i \in V} \frac{d_{\sigma(i)i}}{NR(i)}.$$

\subsection{An attempt}
Herein, we present a naive but quite natural algorithm that is able to give a feasible solution for $\mathcal{I}_{{\rm IF}k{\rm CO}}$. For the instance, first remove its input of $q$ in order to obtain an IF$k$C instance $\mathcal{I}_{{\rm IF}k{\rm C}}$. Then, use the algorithm for the IF$k$C to solve $\mathcal{I}_{{\rm IF}k{\rm C}}$ and obtain a feasible solution for $\mathcal{I}_{{\rm IF}k{\rm CO}}$. The naive algorithm is shown as Algorithm 1. It is worth mentioning that Step 2 of Algorithm 1 is a slightly modified version of the 2FAIRKCENTER algorithm for the IF$k$C appeared in \cite{jkl}.

\begin{algorithm} \label{alg1}
\caption{: A Naive Algorithm for the IF$k$CO.}
{ {\bf Input:} An  IF$k$CO instance $\mathcal{I}_{{\rm IF}k{\rm CO}}= (V,  \{d_{ij}\}_{i, j \in V}, k, q)$.}\\
{ {\bf Output:} A feasible solution $(S, O, \sigma)$ for the instance  $\mathcal{I}_{{\rm IF}k{\rm CO}}$.}
\begin{description}
\item[Step 1]  For $\mathcal{I}_{{\rm IF}k{\rm CO}}$, get rid of $q$ to yield a IF$k$C instance $\mathcal{I}_{{\rm IF}k{\rm C}}= (V,  \{d_{ij}\}_{i, j \in V}, k)$.

\item[Step 2]
{Initially, set $P:= V$, $S: = \emptyset$}.
\begin{description}
\item ~~~{\bf While} $P \not= \emptyset$ {\bf do}
\begin{description}
\item Find a vertex $s \in P$ such that
\begin{description}
\item $s := \arg \min \limits_{i \in P} NR(i).$
\end{description}
\item Update $S:=  S \cup \{s\}$, $P:= \{ i \in P: d_{is} > 2 \cdot NR(i)\}$.
\end{description}
\end{description}

\item[Step 3]
Set $O := \emptyset$, $\sigma(i):= \arg \min_{h \in S} d_{ih}$ for each $i \in V$.
\item[Step 4] Output $(S, O, \sigma)$ as the solution for the instance  $\mathcal{I}_{{\rm IF}k{\rm CO}}$.
\end{description}
\end{algorithm}

For any selected center $s \in S$, denote by $V(s, NR(s))$ the set of vertices within the distance of $NR(s)$ from $s$. We call $V(s, NR(s))$ the neighboring vertex set of $s$.
Here are some observations about Algorithm 1.
\begin{obs} \label{o1}
For any selected center $s \in S$, there are at least $n/k$ vertices in its neighboring vertex set.
\end{obs}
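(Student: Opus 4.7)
The plan is to prove this observation purely by unpacking definitions; it does not actually depend on \emph{how} Algorithm~1 selects $s$, only on the fact that $s \in V$. First I would recall the definition of $NR(s)$: it is the distance from $s$ to its $\lceil n/k \rceil_{\rm th}$ nearest neighbor in $V$, where, as stated in the problem description, $s$ itself counts as its own nearest neighbor. Consequently, the set of $s$'s $\lceil n/k \rceil$ nearest neighbors all lie at distance at most $NR(s)$ from $s$.

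Next, by the very definition of the neighboring vertex set, $V(s, NR(s)) = \{ i \in V : d_{is} \le NR(s)\}$ contains every vertex within distance $NR(s)$ of $s$, and in particular it contains the $\lceil n/k \rceil$ nearest neighbors just mentioned. Therefore
$$|V(s, NR(s))| \;\ge\; \lceil n/k \rceil \;\ge\; n/k,$$
which is exactly the claimed bound.

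The main obstacle is essentially nonexistent: the statement is a direct consequence of the two definitions and the trivial inequality $\lceil n/k \rceil \ge n/k$. The only small points that deserve attention are that $s$ must be counted among its own nearest neighbors (so that the bound remains meaningful even in degenerate cases such as $\lceil n/k \rceil = 1$), and that the bound only uses $s \in V$, so the observation in fact holds for \emph{every} vertex of $V$, not merely for the centers selected by Algorithm~1.
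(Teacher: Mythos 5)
Your proof is correct and follows the same route as the paper, which simply notes the observation is immediate from the definition of $NR(s)$; your write-up just makes the definition-unpacking (the $\lceil n/k \rceil$ nearest neighbors, including $s$ itself, all lie within distance $NR(s)$, hence $|V(s,NR(s))| \geq \lceil n/k \rceil \geq n/k$) explicit. Your side remark that the bound holds for every vertex of $V$, not only selected centers, is accurate and harmless.
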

This observation can be seen from the definition of $NR(s)$.
\begin{obs} \label{o2}
If a vertex $s$ is selected as a center, any other vertex in its neighboring vertex set cannot be a center.
\end{obs}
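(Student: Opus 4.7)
The plan is to argue by contradiction: suppose $s \in S$ is a selected center and some other vertex $i \in V(s, NR(s))$ is also in $S$. I will derive a contradiction from the greedy selection rule in Step 2 of Algorithm 1, splitting on which of the two is inserted into $S$ first. The key facts are (a) Step 2 only selects centers from the current pool $P$, and (b) after adding a new center $s$, every $j \in P$ with $d_{js} \leq 2 \cdot NR(j)$ is removed from $P$.

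First I would treat the case in which $s$ is selected before $i$. Because $i$ is selected later, $i$ must still be in $P$ at the iteration when $s$ is picked. The argmin rule then forces $NR(s) \leq NR(i)$. Combined with the assumption $i \in V(s, NR(s))$, which means $d_{is} \leq NR(s)$, we get the chain
\[
d_{is} \leq NR(s) \leq NR(i) \leq 2 \cdot NR(i),
\]
so $i$ is deleted from $P$ in the update step right after $s$ is added, contradicting the assumption that $i$ is chosen as a center in a later iteration.

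Next I would handle the symmetric case in which $i$ is selected before $s$. This case needs a separate argument because membership in $V(s, NR(s))$ does not immediately yield $s \in V(i, NR(i))$; the threshold $NR$ is anchored at the center, so the relation is not symmetric. However, because $s$ is selected in a later iteration, $s \in P$ when $i$ is picked, so the argmin rule now gives $NR(i) \leq NR(s)$. Using $d_{si} = d_{is} \leq NR(s)$ once more yields
\[
d_{si} \leq NR(s) \leq 2 \cdot NR(s),
\]
so $s$ is deleted from $P$ in the iteration that adds $i$, again contradicting that $s$ is chosen later.

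The main obstacle I anticipate is resisting the temptation to do a ``without loss of generality'' reduction between the two cases; the asymmetry in the definition of $V(s, NR(s))$ means the two sub-arguments must be written out separately, even though they share the same structure. Once both cases have been handled, the observation follows immediately, since any would-be coexistence of $s$ and another $i \in V(s, NR(s))$ in $S$ is ruled out.
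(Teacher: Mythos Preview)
Your argument is correct and uses the same core mechanism as the paper---the argmin rule forces $NR(s) \leq NR(i)$ for the later-selected vertex, and combined with $d_{is} \leq NR(s)$ this triggers removal from $P$. The paper's proof only spells out the case where $i$ is still in $P$ at the moment $s$ is chosen (your first case), dismissing the other possibility with ``either already be removed from the current $P$'' without checking that such an $i$ could not itself have been a center; your explicit treatment of the second ordering closes that small gap, so your version is actually the more complete of the two.
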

\begin{proof}
 When a vertex $s$ is selected as a center, each vertex  $i \in V(s, NR(s))$ either already be removed from the current $P$ or it satisfies $d_{is} \leq NR(s) \leq 2NR(s) \leq 2 NR(i)$. The last inequality follows by the principle of selecting centers in Step 2.
For the second case, the vertex $i$ will be removed from $P$ and cannot be a center anymore, because of the principle of updating $P$ in Step 2.
\qed
\end{proof}
\begin{obs}  \label{o3}
For any two selected centers  $s, s^\prime \in S$, their neighboring vertex set are disjoint.
\end{obs}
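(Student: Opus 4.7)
The plan is to argue by contradiction, exploiting the two facts that drive Step~2 of Algorithm~1: centers are picked in order of increasing $NR$-value, and any vertex that remains in $P$ after a center $s$ is added must satisfy $d_{is} > 2 \cdot NR(i)$. So I would start by fixing two distinct selected centers $s,s' \in S$ and, without loss of generality, assuming $s$ is added to $S$ before $s'$. Because $s'$ must still lie in $P$ at the moment it is chosen, and because the algorithm always picks the vertex minimizing $NR$ on the current $P$, I would first record that $NR(s) \leq NR(s')$.

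Next, I would suppose for contradiction that $V(s, NR(s)) \cap V(s', NR(s')) \neq \emptyset$ and pick some $v$ in the intersection. By the definition of the neighboring vertex set, $d_{vs} \leq NR(s)$ and $d_{vs'} \leq NR(s')$. The triangle inequality then gives
\[
d_{ss'} \;\leq\; d_{vs} + d_{vs'} \;\leq\; NR(s) + NR(s') \;\leq\; 2 \cdot NR(s'),
\]
where the last step uses $NR(s) \leq NR(s')$ established above.

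On the other hand, because $s'$ survived the update $P := \{i \in P : d_{is} > 2 \cdot NR(i)\}$ performed when $s$ was added, I have $d_{ss'} > 2 \cdot NR(s')$. This directly contradicts the display, which establishes the observation. The only part that requires any care is the ordering claim $NR(s) \leq NR(s')$, which I expect to be the main (albeit minor) obstacle: it needs the fact that $s'$ was still a candidate in $P$ at the iteration when $s$ was chosen, so that the $\arg\min$ rule in Step~2 applies to compare the two $NR$-values; this in turn follows because $s'$ is never removed from $P$ before being selected.
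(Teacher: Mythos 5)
Your proof is correct and follows essentially the same route as the paper: assume $s$ is chosen before $s'$, use $NR(s)\leq NR(s')$ (from the $\arg\min$ rule, since $s'$ is still in $P$) together with the triangle inequality to get $d_{ss'}\leq 2\,NR(s')$, and contradict the fact that $s'$ survives the update of $P$. The only difference is presentational — you state the contradiction via the surviving condition $d_{ss'}>2\,NR(s')$, while the paper phrases it as $s'$ being removed from $P$ — which is the same argument.
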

\begin{proof}
Assume that center $s$ is selected before $s^\prime$. If there exists a vertex $i \in V(s,NR(s)) \cap V(s^\prime,NR(s^\prime))$, we have that $d_{ss^\prime} \leq d_{si} + d_{is^\prime} \leq NR(s) + NR(s^\prime) \leq 2 NR(s^\prime)$. The last inequality follows by the principle of selecting centers in Step 2. In this case, because of the principle of updating $P$ in Step 2, once $s$ is selected, the vertex $s^\prime$ will be remove from $P$ and cannot be a center anymore, which is a contradiction.
\qed
\end{proof}

The following lemma gives the feasibility of the solution $(S, O, \sigma)$ returned from Algorithm 1.


\begin{lem} \label{lem1}
Algorithm 1 outputs a feasible solution $(S, O, \sigma)$ for any {\rm IF}$k${\rm CO} instance $\mathcal{I}_{{\rm IF}k{\rm CO}}$.
\end{lem}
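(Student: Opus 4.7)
The plan is to verify directly the three feasibility requirements of a solution $(S,O,\sigma)$ for $\mathcal{I}_{{\rm IF}k{\rm CO}}$: namely that $|S|\le k$, that $|O|\le q$, and that $\sigma$ is a well-defined map from $V\setminus O$ into $S$. The latter two conditions will be immediate by inspection of Algorithm 1, so the only real content is the bound $|S|\le k$, which I intend to derive from a disjoint-packing argument using Observations \ref{o1} and \ref{o3}.

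First I would handle the easy parts. Since Step 3 sets $O:=\emptyset$, we have $|O|=0\le q$ for every nonnegative integer input $q$. Moreover, Step 2 enters its while-loop with $P=V\neq\emptyset$ (assuming a nonempty input instance), so at least one iteration occurs and $S\neq\emptyset$; hence $\sigma(i):=\arg\min_{h\in S}d_{ih}$ in Step 3 is well-defined and maps each $i\in V=V\setminus O$ into $S$.

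The main step is to bound $|S|$. Here I would invoke Observations \ref{o1} and \ref{o3} simultaneously. Observation \ref{o1} says that for every $s\in S$, the neighboring vertex set $V(s,NR(s))$ has cardinality at least $\lceil n/k\rceil$. Observation \ref{o3} says that for distinct $s,s'\in S$, the sets $V(s,NR(s))$ and $V(s',NR(s'))$ are disjoint. Taking a disjoint union inside $V$ therefore yields
\begin{equation*}
|S|\cdot\lceil n/k\rceil\;\le\;\Bigl|\,\bigcup_{s\in S}V(s,NR(s))\Bigr|\;\le\;|V|\;=\;n,
\end{equation*}
and rearranging gives $|S|\le n/\lceil n/k\rceil\le n/(n/k)=k$, as required.

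I do not anticipate any real obstacle: the two observations above do almost all the work, and the proof reduces to a one-line pigeonhole estimate followed by noting that Step 3 produces a valid assignment with an empty outlier set. The only point to be a bit careful about is using $\lceil n/k\rceil\ge n/k$ in the final division so that the bound $|S|\le k$ comes out as an integer inequality rather than merely a real-valued one.
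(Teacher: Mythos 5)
Your proposal is correct and follows essentially the same route as the paper: the paper also gets $|O|\le q$ for free from $O=\emptyset$ and bounds $|S|\le k$ by the same pigeonhole argument, phrased as "each iteration removes at least $n/k$ disjoint vertices from $P$ (Observations \ref{o1}--\ref{o3}), so there are at most $k$ iterations," which is just your packing inequality $|S|\cdot\lceil n/k\rceil\le n$ in loop form. Your explicit check that $\sigma$ is well-defined is a small extra nicety the paper omits.
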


\begin{proof}
Recall that a solution $(S, O, \sigma)$ is feasible only if $|S| \leq k$ and $|O| \leq q$.
Since $O = \emptyset$, the cardinality bound of $O$ obviously holds. We only need to prove $|S| \leq k$.

From Observations \ref{o1}-\ref{o3}, we can see that each iteration of Step 2 of Algorithm 1 ensures that at least $n/k$ disjoint vertices are removed from the current $P$. Once $P=\emptyset$, we end the iterations. Therefore, there are at most $k$ iterations in Step 2, which implies $|S| \leq k$.
This completes the proof of the lemma.
\qed
 \end{proof}

Note that Algorithm 1 may return a feasible solution far from the optimal one for some IF$k$CO instances, as shown by Example 1. \\
{\bf Example 1.} Consider the IF$k$CO instance $(V,  \{d_{ij}\}_{i, j \in V}, k, q)$ where $V=\{h, i, j\}$, $d_{hi}=M$, $d_{hj}=M$, $d_{ij}=1$, $k=1$ and $q=1$. Suppose that $M >1$.

Recall that for any $v \in V$, its neighborhood radius $NR(v)$ is the distance between $v$ and its $\lceil n/k \rceil$ nearest neighbor. Since $n=|\{h, i, j\}|=3$ and $\lceil n/k \rceil = \lceil 3/1 \rceil = 3$, the neighborhood radiuses used in Algorithm 1 are $NR(h)=NR(i)=NR(j)=M$. If we use Algorithm 1 to solve the instance, Algorithm 1 will arbitrarily select a vertex in $V$ as the center. It is possible that Algorithm 1 selects $h$ as the center and outputs $(\{h\}, \emptyset, \sigma_1)$ as the solution, where $\sigma_1(h)=\sigma_1(i)=\sigma_1(j)=h$. Recall that for any $v \in V$, its outlier-related neighborhood radius $NR_q(v)$ is the distance between $v$ and its $\lceil (n -q)/{k} \rceil$ nearest neighbor. Since $\lceil (n-q)/k \rceil = \lceil (3-1)/ 1 \rceil = 2$, the outlier-related neighborhood radiuses are $NR_q(h)=M$ and $NR_q(i)=NR_q(j)=1$.
Therefore, the outlier-ralated fairness ratio of the solution $(\{h\}, \emptyset, \sigma_1)$ is $M$, i.e.,
\bea
\alpha(\{h\}, \emptyset, \sigma_1) &=& \max \limits_{v \in \{h, i, j\} \setminus \emptyset} \frac{d_{\sigma_1(v)v}}{NR_q(v)} \nn \\
&=&\max\{\frac{d_{\sigma_1(h)h}}{NR_q(h)},\frac{d_{\sigma_1(i)i}}{NR_q(i)},
\frac{d_{\sigma_1(j)j}}{NR_q(j)}\} \nn\\
&=&\max\{\frac{d_{hh}}{NR_q(h)},\frac{d_{hi}}{NR_q(i)},
\frac{d_{hj}}{NR_q(j)}\} \nn\\
&=&\max\{\frac{0}{M},\frac{M}{1},\frac{M}{1}\} \nn \\
&=&M. \nn
\eea
The optimal solution is to select either $i$ or $j$ as the center and $h$ as the outlier. Assume that the selected center is $i$. Therefore, the optimal solution is $(\{i\}, \{h\}, \sigma^*)$, where $\sigma^*(i)=\sigma^*(j)=i$.
The outlier-related fairness ratio of the solution $(\{i\}, \{h\}, \sigma^*)$ is $1$, i.e.,
\bea
\alpha(\{i\}, \{h\}, \sigma^*) &=& \max \limits_{v \in \{h, i, j\} \setminus \{h\}} \frac{d_{\sigma^*(v)v}}{NR_q(v)} \nn \\
&=&\max\{\frac{d_{\sigma^*(i)i}}{NR_q(i)},\frac{d_{\sigma^*(j)j}}{NR_q(j)}\} \nn\\
&=&\max\{\frac{d_{ii}}{NR_q(i)},
\frac{d_{ij}}{NR_q(j)}\} \nn\\
&=&\max\{\frac{0}{1},\frac{1}{1}\} \nn \\
&=&1. \nn
\eea
Therefore, we have  $$\frac{\alpha(\{h\}, \emptyset, \sigma_1)}{\alpha(\{i\}, \{h\}, \sigma^*)} =\frac{M}{1},$$ which implies that the Algorithm 1 may return a solution far from being optimal. An illustration of Example 1 is given in Fig. \ref{fig1}.

\begin{figure}[h]
\centering
  {
    \label{fig1a} 
    \includegraphics[height=4.5cm]{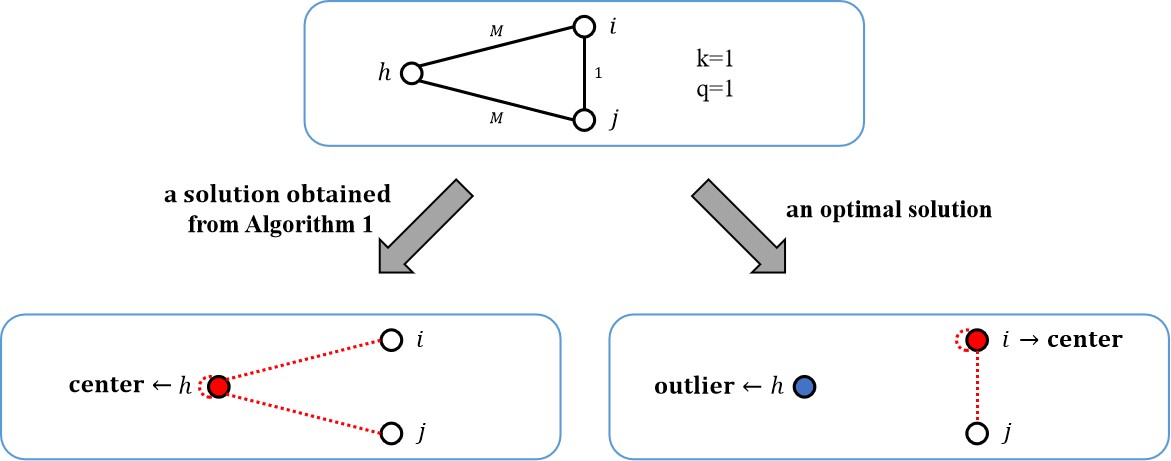}}
\caption{An illustration of Example 1. In the top graph, the circles are the vertices and the distances are given alongside the vertex pairs. The red and blue circles represent the selected centers and outlier, respectively. The dotted lines represent the assignments of the vertices.   }
\label{fig1}
\end{figure}

\section{Advisable algorithms for the IF$k$CO}
In this section, we first propose a basic $4$-approximation algorithm for the IF$k$CO. Then, we give a refined algorithm, which overcomes the limitation of the basic one.
\subsection{A basic algorithm}

The main adversary for Algorithm 1 is that to select vertex $i$ with a minimum neighborhood radius $NR(i)$ may lead to very bad outcome,  and no vertices are selected as outliers in the obtained solution. Therefore, we specifically design a basic algorithm for the IF$k$CO. Our basic algorithm keeps finding a selectable vertex $i$ with the minimum outlier-related neighborhood radius $NR_q(i)$ as a center while the set of selectable vertices is not empty and the number of currently chosen centers is less than $k$.
The basic algorithm is formally presented as Algorithm 2.
\begin{algorithm} \label{alg1}
\caption{: A Basic Algorithm for the IF$k$CO.}

{ {\bf Input:} An  IF$k$CO instance $\mathcal{I}_{{\rm IF}k{\rm CO}}= (V,  \{d_{ij}\}_{i, j \in V}, k, q)$.}\\
{ {\bf Output:} A feasible solution $(S, O, \sigma)$ for the instance  $\mathcal{I}_{{\rm IF}k{\rm CO}}$.}

\begin{description}

\item[Step 1]
{Initially, set $P:= V$, $S: = \emptyset$}.
\begin{description}
\item ~~~{\bf While} $P \not= \emptyset$, $|S| < k $ {\bf do}
\begin{description}
\item Find a vertex $s \in P$ such that
\begin{description}
\item  $s := \arg \min \limits_{i \in P} NR_q(i).$
\end{description}
\item Update $S:=  S \cup \{s\}$, $P:= \{ i \in P: d_{is} > 2 \cdot NR_q(i)\}$.
\end{description}
\end{description}

\item[Step 2]
Set $O := P$, $\sigma(i):= \arg \min_{h \in S} d_{ih}$ for each $i \in V \setminus O$.
\item[Step 3] Output $(S, O, \sigma)$ as the solution for the instance  $\mathcal{I}_{{\rm IF}k{\rm CO}}$.
\end{description}
\end{algorithm}

For any selected center $s \in S$, denote by $V(s, NR_q(s))$ the set of vertices within the distance of $NR_q(s)$ from $s$. We call $V(s, NR_q(s))$ the outlier-related neighboring vertex set of $s$.
Here are some observations about Algorithm 2.
\begin{obs} \label{o4}
For any selected center $s \in S$, there are at least $(n - q )/k$ vertices in its  outlier-related neighboring vertex set.
\end{obs}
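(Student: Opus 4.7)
The plan is to argue that Observation \ref{o4} is an immediate consequence of the definition of $NR_q(s)$, directly paralleling the justification given for Observation \ref{o1}. First I would recall that, by definition, $NR_q(s)$ is the distance from $s$ to its $\lceil (n-q)/k \rceil_{\rm th}$ nearest neighbor in $V$, where $s$ counts as its own nearest neighbor. Consequently, the set of the $\lceil (n-q)/k \rceil$ nearest neighbors of $s$ all lie within distance $NR_q(s)$ of $s$, and hence are all contained in $V(s, NR_q(s))$ by the definition of the outlier-related neighboring vertex set.

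From this it would follow that
\[
|V(s, NR_q(s))| \;\geq\; \lceil (n-q)/k \rceil \;\geq\; (n-q)/k,
\]
which is exactly the claimed bound. Since the argument only invokes the definitions of $NR_q(\cdot)$ and $V(s, NR_q(s))$, and does not depend on any particular property of how $s$ was chosen by Algorithm 2, it applies uniformly to every $s \in S$.

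There is no genuine obstacle here: the only thing to be careful about is making explicit that $s$ itself counts among its own nearest neighbors (so the $\lceil (n-q)/k\rceil$ nearest neighbors do include $s$, which is consistent with $d_{ss}=0 \leq NR_q(s)$), and that the inequality $\lceil x \rceil \geq x$ is used to pass from the ceiling back to $(n-q)/k$. I would therefore present the proof as a one- or two-sentence remark invoking the definition, in the same compact style used for Observation \ref{o1}.
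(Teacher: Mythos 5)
Your proposal is correct and matches the paper's justification, which simply notes that the observation follows from the definition of $NR_q(s)$; you spell out the same one-line argument (the $\lceil (n-q)/k \rceil$ nearest neighbors of $s$, including $s$ itself, lie within distance $NR_q(s)$, and $\lceil (n-q)/k \rceil \geq (n-q)/k$) in slightly more detail.
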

This observation can be seen from the definition of $NR_q(s)$.
\begin{obs} \label{o5}
If a vertex $s$ is selected as a center, any other vertex in its outlier-related neighboring vertex set cannot be a center.
\end{obs}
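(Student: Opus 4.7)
The plan is to mirror the argument used for Observation \ref{o2}, with $NR$ replaced by $NR_q$ and the selection/removal rules taken from Step 1 of Algorithm 2. Fix a center $s \in S$ together with the iteration at which $s$ is chosen, let $P$ denote the selectable set at the start of that iteration, and consider an arbitrary vertex $i \in V(s, NR_q(s)) \setminus \{s\}$, so that $d_{is} \leq NR_q(s)$ by definition of the outlier-related neighboring vertex set.

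I would split into two cases according to the status of $i$ at the moment $s$ is picked. In the first case, $i$ has already been removed from $P$ in some earlier iteration; since vertices are never returned to $P$, the vertex $i$ cannot be chosen as a center in any later iteration, and the claim holds trivially. In the second case, $i$ still belongs to $P$. Because Step 1 selects $s := \arg\min_{j \in P} NR_q(j)$ and $i \in P$, we have $NR_q(s) \leq NR_q(i)$. Chaining with $d_{is} \leq NR_q(s)$ yields
\[
d_{is} \;\leq\; NR_q(s) \;\leq\; NR_q(i) \;\leq\; 2\,NR_q(i),
\]
so $i$ fails the retention condition $d_{is} > 2\,NR_q(i)$ in the update rule of Step 1 and is therefore purged from $P$ at the end of this iteration. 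Hence $i$ cannot appear in $S$ afterwards either, which is the desired conclusion.

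There is essentially no real obstacle; the only points to be careful about are (i) that the quantifier ``any other vertex'' means we must handle the vertex $s$ itself separately (which is why I restrict to $i \neq s$), and (ii) that Algorithm 2 introduces the extra halting condition $|S| < k$ compared with Algorithm 1, but this only shrinks the set of centers eventually selected and cannot add any vertex that the original removal argument already excludes. So the proof reduces to the two-case dichotomy above, driven by the selection rule for $s$ and the removal rule for updating $P$.
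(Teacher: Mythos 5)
Your proof is correct and follows essentially the same two-case argument as the paper: either $i$ was already removed from $P$, or $i \in P$ and the selection rule gives $NR_q(s) \leq NR_q(i)$, so $d_{is} \leq 2\,NR_q(i)$ forces its removal in the update step. Your write-up is, if anything, slightly cleaner in how it chains the inequalities and in noting that the extra halting condition $|S| < k$ is irrelevant.
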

\begin{proof}
When a vertex $s$ is selected as a center, each vertex  $i \in V(s, NR_q(s))$ either already be removed from the current $P$ or it satisfies $d_{is} \leq NR_q(s) \leq 2NR_q(s) \leq 2 NR(i)$. The last inequality follows by the principle of selecting centers in Step 1.
For the second case, the vertex $i$ will be removed from $P$ and cannot be a center anymore, because of the principle of updating $P$ in Step 1.
\qed
\end{proof}
\begin{obs}  \label{o6}
For any two selected centers  $s, s^\prime \in S$, their outlier-related neighboring vertex sets are disjoint.
\end{obs}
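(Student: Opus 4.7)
The plan is to mirror the proof of Observation \ref{o3} almost verbatim, replacing the role of the neighborhood radius $NR(\cdot)$ by the outlier-related neighborhood radius $NR_q(\cdot)$ that actually drives the selection and removal rules of Step 1 of Algorithm 2. Concretely, I would assume without loss of generality that $s$ is added to $S$ before $s^\prime$ during the \textbf{while}-loop, and then argue by contradiction: suppose there exists some vertex $i \in V(s, NR_q(s)) \cap V(s^\prime, NR_q(s^\prime))$.

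The key steps, in order, are as follows. First, apply the triangle inequality to bound $d_{ss^\prime} \leq d_{si} + d_{is^\prime} \leq NR_q(s) + NR_q(s^\prime)$. Second, use the argmin rule in Step 1: because both $s$ and $s^\prime$ lie in $P$ at the moment $s$ is chosen, we must have $NR_q(s) \leq NR_q(s^\prime)$, and hence $d_{ss^\prime} \leq 2 \cdot NR_q(s^\prime)$. Third, invoke the updating rule $P := \{i \in P : d_{is} > 2 \cdot NR_q(i)\}$ applied with $i = s^\prime$: the inequality just derived is precisely the negation of the condition required to remain in $P$, so $s^\prime$ would have been discarded as soon as $s$ was selected. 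This contradicts $s^\prime$ being selected later as a center, completing the proof.

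I do not anticipate any genuine obstacle; the only subtlety is the tacit assumption that $s^\prime$ is still in $P$ when $s$ is picked, which is needed to justify $NR_q(s) \leq NR_q(s^\prime)$. But this is automatic: if $s^\prime$ had been removed before $s$ was chosen, it could not subsequently be selected as a center, so we may safely restrict to the case where both are simultaneously present in $P$. Apart from this small bookkeeping remark, the argument is identical in structure to Observation \ref{o3} and should occupy only a few lines.
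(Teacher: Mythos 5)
Your proposal is correct and follows essentially the same argument as the paper: triangle inequality plus the selection rule ($NR_q(s)\leq NR_q(s^\prime)$ since both are in $P$ when $s$ is chosen) gives $d_{ss^\prime}\leq 2\,NR_q(s^\prime)$, so the update rule would have removed $s^\prime$ from $P$, a contradiction. Your explicit remark that $s^\prime$ must still be in $P$ when $s$ is selected is a point the paper leaves implicit, but the structure is identical.
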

\begin{proof}
Assume that center $s$ is selected before $s^\prime$. If there exists a vertex $i \in V(s,NR_q(s)) \cap V(s^\prime,NR_q(s^\prime))$, we have that $d_{ss^\prime} \leq d_{si} + d_{is^\prime} \leq NR_q(s) + NR_q(s^\prime) \leq 2 NR_q(s^\prime)$. The last inequality follows by the principle of selecting centers in Step 1. In this case, because of the principle of updating $P$ in Step 1, once $s$ is selected, the vertex $s^\prime$ will be remove from $P$ and cannot be a center anymore, which is a contradiction.
\qed
\end{proof}

The following lemma gives the feasibility of the solution $(S, O, \sigma)$ obtained from Algorithm 2.

\begin{lem} \label{lem2}
Algorithm 2 outputs a feasible solution $(S, O, \sigma)$ for any {\rm IF}$k${\rm CO} instance $\mathcal{I}_{{\rm IF}k{\rm CO}}$.
\end{lem}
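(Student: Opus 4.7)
The feasibility requirement splits into two inequalities: $|S|\le k$ and $|O|\le q$. The first is immediate because the while-loop in Step~1 carries $|S|<k$ as an explicit guard and adds exactly one center per iteration, so upon termination $|S|\le k$. The real content of the lemma is the bound $|O|\le q$, where $O$ is set to the residual $P$ after Step~1.

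My plan is to bound $|O|$ by counting the vertices that must have been eliminated from $P$ during Step~1. The eliminations come from two sources: the center $s$ selected in each iteration (which trivially leaves $P$), and every $i\in P$ with $d_{is}\le 2\,NR_q(i)$. I would argue that each iteration removes, in a disjoint fashion, a block of at least $\lceil(n-q)/k\rceil$ vertices. Concretely, for the center $s$ chosen in a given iteration, Observation~\ref{o4} gives $|V(s,NR_q(s))|\ge \lceil(n-q)/k\rceil$; Observation~\ref{o5} (combined with the update rule $P:=\{i\in P:d_{is}>2\,NR_q(i)\}$) says that every member of $V(s,NR_q(s))$ is removed in this iteration if it had not already been removed earlier; and Observation~\ref{o6} guarantees that across different iterations these outlier-related neighboring sets are pairwise disjoint, so there is no double-counting.

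Combining these three observations, if Step~1 performs $t$ iterations, then at least $t\cdot\lceil(n-q)/k\rceil\ge t(n-q)/k$ vertices have been removed from the initial $P=V$. There are two cases for termination. If Step~1 exits because $P=\emptyset$, then $O=\emptyset$ and $|O|=0\le q$ trivially. Otherwise Step~1 exits because $|S|=k$, so $t=k$; in that case the removal count is at least $k\cdot(n-q)/k=n-q$, hence the residual $|O|=|P|\le n-(n-q)=q$, as required. I would also remark briefly that $\sigma$ in Step~2 is well-defined whenever $V\setminus O$ is nonempty, because in that case at least one center has been opened (so $S\ne\emptyset$), and the $\arg\min$ is taken over a finite nonempty set.

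The only subtle point, and the one I would be careful to spell out, is Observation~\ref{o6} being applied here: I must ensure that the neighborhoods $V(s,NR_q(s))$ being counted as disjoint are really the sets attached to centers all actually selected by Algorithm~2, not some hypothetical collection. Since Observations~\ref{o4}--\ref{o6} are phrased exactly for selected centers, this is straightforward. No calculational obstacle remains; the bound $|O|\le q$ then follows directly and the lemma is proved.
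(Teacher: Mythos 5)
Your proposal is correct and follows essentially the same route as the paper's own proof: bound $|S|$ by the loop guard in Step~1, then use Observations~\ref{o4}--\ref{o6} to argue that each of the (at most $k$) iterations removes a disjoint set of at least $(n-q)/k$ vertices from $P$, and handle the two termination cases ($P=\emptyset$ versus $|S|=k$) separately to conclude $|O|\le q$. The only additions beyond the paper's argument are minor clarifications (well-definedness of $\sigma$ and the disjointness being over actually selected centers), which do not change the substance.
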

\begin{proof}
Recall that a solution $(S, O, \sigma)$ is feasible only if $|S| \leq k$ and $|O|\leq q.$
Note that Step 1 of Algorithm 2 guarantees that $|S| \leq k$. Thus we only need to prove  $|O| \leq q$.

We consider the two cases that may terminate Step 1. The simpler case is  $P = \emptyset$, and the other case is $|S|=k$. If $P = \emptyset$, the cardinality bound of $O$ obviously holds, since $|O|=|P|=0 \leq q$. If $|S|=k$, there are $k$ iterations.
We conclude  from  Observations \ref{o4}-\ref{o6} that each iteration of Step 1 in Algorithm 2 guarantees that at least $(n - q )/k$ disjoint vertices are removed from the current $P$. Therefore, the number of vertices removed from the initial $P$ is at least $n-q$. We have that $|O| = |P| \leq n - (n-q)= q$.
This completes the proof of the lemma.
\qed
\end{proof}


\begin{lem} \label{l2a}
The outlier-related fairness ratio of the solution $(S, O, \sigma)$ obtained from Algorithm 2 for any {\rm IF}$k${\rm CO} instance $\mathcal{I}_{{\rm IF}k{\rm CO}}$  is at most $2$, i.e.,
$$\alpha(S, O, \sigma) \leq 2.$$

\end{lem}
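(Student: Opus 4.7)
The plan is a short case analysis on why a vertex $i \in V \setminus O$ was removed from the working set $P$. The upper bound $\alpha(S, O, \sigma) \leq 2$ will come immediately from the removal rule in Step 1 of Algorithm 2, combined with the fact that $\sigma$ assigns each vertex to its closest open center.

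First I would fix an arbitrary $i \in V \setminus O$ and split into two cases according to whether $i$ is itself a center. If $i \in S$, then since $\sigma(i) = \arg\min_{h \in S} d_{ih}$ and $d_{ii} = 0$, we have $d_{\sigma(i)i} = 0$, so the outlier-related fairness ratio at $i$ is $0 \leq 2$. If $i \notin S$, then because $i \notin O$ and $O$ is defined as the set $P$ at termination, the vertex $i$ must have been removed from $P$ at some iteration of Step 1. The only way Step 1 removes a vertex other than the newly chosen center is through the update rule $P := \{j \in P : d_{js} > 2 \cdot NR_q(j)\}$; hence at the iteration in which $i$ was removed, the center $s$ selected in that iteration satisfies $d_{is} \leq 2 \cdot NR_q(i)$.

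Since $s \in S$ and $\sigma(i)$ is the closest center to $i$ among those in $S$, we obtain
\[
d_{\sigma(i)i} \;\leq\; d_{is} \;\leq\; 2 \cdot NR_q(i),
\]
so $d_{\sigma(i)i}/NR_q(i) \leq 2$. Taking the maximum over all $i \in V \setminus O$ yields $\alpha(S, O, \sigma) \leq 2$.

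Honestly, I do not expect any real obstacle here: the bound is essentially built into the removal rule in Step 1. The only small subtlety is handling the two possible termination conditions ($P = \emptyset$ versus $|S| = k$) uniformly, but the argument above does not actually depend on which one triggered: the only fact used is that every vertex in $V \setminus O$ either lies in $S$ or was removed from $P$ because some chosen center came within distance $2 \cdot NR_q(\cdot)$ of it. A degenerate case worth a sentence is $NR_q(i) = 0$, which forces $d_{\sigma(i)i} = 0$ by the same inequality, so the ratio can safely be interpreted as $0$ there.
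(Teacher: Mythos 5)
Your argument is correct and is essentially the paper's own proof: the key step in both is that every vertex in $V \setminus O$ was removed from $P$ only because some selected center $s$ satisfies $d_{is} \leq 2 \cdot NR_q(i)$, and then $d_{\sigma(i)i} \leq d_{is}$ since $\sigma$ assigns to the nearest center. Your extra case split on $i \in S$ and the remark on $NR_q(i) = 0$ are harmless refinements of the same reasoning, not a different route.
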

\begin{proof}
For the solution $(S, O, \sigma)$ obtained from Algorithm 2, from Step 1 of Algorithm 2, it can be seen that for each vertex $i \in V \setminus O$, there must exist a center $s \in S$ such that $d_{is} \leq 2 NR_q(i)$. Recall that
$\sigma(i):= \arg \min_{h \in S} d_{ih}$ for each $i \in V \setminus O$. Therefore, we have that
$$d_{\sigma(i)i} \leq d_{is} \leq 2 NR_q(i).$$
That means,
$$\frac{d_{\sigma(i)i}}{NR_q(i)} \leq 2 {\rm ~for~any}~ i \in V\setminus O.$$
Thus, we obtain that
\bea
&&\alpha(S, O, \sigma)= \max \limits_{i \in V \setminus O} \frac{d_{\sigma(i)i}}{NR_q(i)} \leq 2.\nn
\eea
This completes the proof of this lemma.
\qed
\end{proof}

\begin{lem} \label{l2b}
The outlier-related fairness ratio of the optimal solution $(S^*, O^*, \sigma^*)$ for any {\rm IF}$k${\rm CO} instance $\mathcal{I}_{{\rm IF}k{\rm CO}}$ is at least $1/2$, i.e.,
$$\alpha(S^*, O^*, \sigma^*)\geq \frac{1}{2}.$$
\end{lem}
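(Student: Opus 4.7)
The plan is to use a pigeonhole argument together with the triangle inequality to identify a single served vertex whose assignment distance is forced to be at least half its outlier-related neighborhood radius. At a high level, I want to exhibit some $i^\ast \in V \setminus O^\ast$ with $d_{\sigma^\ast(i^\ast) i^\ast}/NR_q(i^\ast) \geq 1/2$, since this immediately lower-bounds the maximum $\alpha(S^\ast,O^\ast,\sigma^\ast)$.

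First I would invoke pigeonhole on the optimal assignment. Because $|O^\ast| \leq q$ and $|S^\ast| \leq k$, the served set $V \setminus O^\ast$ has size at least $n-q$ and is partitioned into at most $k$ clusters by $\sigma^\ast$. Hence there exists some optimal center $s^\ast \in S^\ast$ whose cluster $C := (\sigma^\ast)^{-1}(s^\ast)$ satisfies $|C| \geq \lceil (n-q)/k \rceil$. Let $i^\ast \in C$ be the vertex in $C$ farthest from $s^\ast$, so $d_{s^\ast j} \leq d_{s^\ast i^\ast}$ for every $j \in C$. Note $i^\ast \in V \setminus O^\ast$ by construction and $\sigma^\ast(i^\ast) = s^\ast$.

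Next I would upper-bound $NR_q(i^\ast)$ by $2 d_{s^\ast i^\ast}$ using the triangle inequality. For every $j \in C$,
\begin{equation*}
d_{i^\ast j} \leq d_{i^\ast s^\ast} + d_{s^\ast j} \leq 2 d_{s^\ast i^\ast}.
\end{equation*}
Therefore at least $|C| \geq \lceil (n-q)/k \rceil$ vertices lie within distance $2 d_{s^\ast i^\ast}$ of $i^\ast$, so by definition $NR_q(i^\ast) \leq 2 d_{s^\ast i^\ast}$. Rearranging,
\begin{equation*}
\frac{d_{\sigma^\ast(i^\ast) i^\ast}}{NR_q(i^\ast)} = \frac{d_{s^\ast i^\ast}}{NR_q(i^\ast)} \geq \frac{1}{2},
\end{equation*}
and since $\alpha(S^\ast,O^\ast,\sigma^\ast)$ is the maximum of such ratios over $V \setminus O^\ast$, the lemma follows.

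The only subtlety (and the place a reader might stumble) is the pigeonhole step: I need $\lceil |V \setminus O^\ast|/|S^\ast|\rceil \geq \lceil (n-q)/k \rceil$, which follows from $|V \setminus O^\ast|/|S^\ast| \geq (n-q)/k$ together with monotonicity of the ceiling. A trivial instance where $V \setminus O^\ast = \emptyset$ (forcing $n \leq q$) should be excluded up front, since the fairness ratio is otherwise vacuous; this is a harmless assumption for any meaningful IF$k$CO instance.
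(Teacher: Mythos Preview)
Your proof is correct and follows essentially the same route as the paper's: pigeonhole to find a large cluster $D(s')$, pick the farthest vertex $s_f$ in that cluster, and use the triangle inequality to show all of $D(s')$ lies within $2d_{s's_f}$ of $s_f$, whence $NR_q(s_f)\le 2d_{s's_f}$. Your extra care with the ceiling and the degenerate case $V\setminus O^\ast=\emptyset$ is a slight refinement over the paper, which leaves those points implicit.
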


\begin{proof}
For each center  $s^* \in S^*$ in the optimal solution $(S^*, O^*, \sigma^*)$, denote by $D(s^*)$ the set of vertices assigned to $s^*$ under the
assignment of $\sigma^*$, i.e.,
$$D(s^*) = \{i \in V \setminus O^*: \sigma^*(i) = s^*\}.$$
Since $(S^*, O^*, \sigma^*)$ is a feasible solution, we have that $|S^*| \leq k$, that $|O^*|\leq q$, and that $$|\bigcup_{s^* \in S^*}D(s^*)| = \sum \limits_{s^* \in S^*}|D(s^*)| =|V \setminus O^*|=|V|-|O^*| \geq n-q.$$
Therefore,
 $$\frac{|\sum \limits_{s^* \in S^*} D(s^*)|}{|S^*|} \geq \frac{n-q}{k}.$$
There must exist some center $s^\prime \in S^*$ satisfying  $|D(s^\prime)| \geq (n-q) / k.$ Let $s_f$ be the vertex farthest from $s^\prime$ in $D(s^\prime)$ under the assignment of $\sigma^*$.
Note that for any vertex $i \in D(s^\prime)$, we have that $$d_{is_f} \leq d_{is^\prime}+d_{s^\prime s_f} \leq 2 \cdot d_{s^\prime s_f}.$$
Since each vertex $i$ in $D(s^\prime)$ is within the distance of $2d_{s^\prime s_f}$ from $s_f$ and $|D(s^\prime)|\geq (n-q) / k$, combining with the definition of $NR_q(s_f)$, we obtain that $$NR_q(s_f) \leq 2 \cdot d_{s^\prime s_f}.$$
Thus, it  satisfies for $s_f \in V \setminus O^*$ that
$$\frac{d_{\sigma^*(s_f)s_f}}{NR_q(s_f)} = \frac{d_{s^\prime s_f}}{NR_q(s_f)} \geq \frac{1}{2}.$$
Therefore,
\bea
&&\alpha(S^*, O^*, \sigma^*)= \max \limits_{i \in V \setminus O^*} \frac{d_{\sigma^*(i)i}}{NR_q(i)} \geq \frac{d_{\sigma^*(s_f)s_f}}{NR_q(s_f)} \geq \frac{1}{2}.\nn
\eea
Complete the proof.
\qed
\end{proof}

From Lemmas \ref{l2a} and \ref{l2b}, for the solution $(S, O, \sigma)$ obtained from Algorithm 2 and the optimal solution $(S^*, O^*, \sigma^*)$, we have that $$\alpha(S, O, \sigma) \leq 4 \cdot \alpha(S^*, O^*, \sigma^*) = 4 \cdot OPT_{\mathcal{I}_{{\rm IF}k{\rm CO}}},$$ which implies the following result of Algorithm 2.
\begin{thm}\label{thm2}
Algorithm 2 is a $4$-approximation algorithm for the {\rm IF}$k${\rm CO}.
\end{thm}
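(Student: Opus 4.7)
The plan is to derive Theorem \ref{thm2} as an immediate consequence of Lemmas \ref{l2a} and \ref{l2b}, which have already been established. The approximation guarantee is not proven by directly comparing $\alpha(S,O,\sigma)$ to the instance optimum via a structural argument; instead, it falls out of the observation that both the algorithm's value and any feasible value (in particular the optimum) admit simple absolute bounds tied to the constant $1$.

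First, I would invoke Lemma \ref{l2a} to obtain $\alpha(S,O,\sigma) \leq 2$ for the solution returned by Algorithm 2 on the instance $\mathcal{I}_{{\rm IF}k{\rm CO}}$. Then I would invoke Lemma \ref{l2b}, which asserts $\alpha(S^*,O^*,\sigma^*) \geq 1/2$ for the optimal solution of the same instance. Combining these two inequalities yields
\[
\alpha(S,O,\sigma) \leq 2 = 4 \cdot \tfrac{1}{2} \leq 4 \cdot \alpha(S^*,O^*,\sigma^*) = 4 \cdot OPT_{\mathcal{I}_{{\rm IF}k{\rm CO}}},
\]
which is precisely the definition of a $4$-approximation.

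There is essentially no obstacle in this last step; all the real work is already packaged into Lemmas \ref{l2a} and \ref{l2b}. The only thing to be careful about is making the role of the lower bound on $OPT$ explicit: one might naively expect an approximation proof to compare the algorithm's cost to the optimum via a chain of inequalities that reference the same vertex or the same center, but here the argument is cleaner because the fairness ratio is scale-free and normalized by $NR_q(\cdot)$, so absolute constants suffice. I would close the proof by remarking that the $4$ factor is the product of the upper bound $2$ of Lemma \ref{l2a} and the reciprocal of the lower bound $1/2$ of Lemma \ref{l2b}, making clear where each factor of $2$ originates.
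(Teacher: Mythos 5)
Your proposal is correct and matches the paper's own argument exactly: the theorem is obtained by combining Lemma \ref{l2a} ($\alpha(S,O,\sigma)\leq 2$) with Lemma \ref{l2b} ($\alpha(S^*,O^*,\sigma^*)\geq 1/2$) to conclude $\alpha(S,O,\sigma)\leq 4\cdot OPT_{\mathcal{I}_{{\rm IF}k{\rm CO}}}$. Nothing further is needed.
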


Suppose that the Algorithm 2 is running on Example 1. It  will arbitrarily select $i$ or $j$ as the center and leave $h$ as an outlier, since Algorithm 2 keeps searching a selectable vertex $v$ with a minimum outlier-related neighborhood radius $NR_q(v)$ to select as a center and $NR_q(h)=M > 1=NR_q(i)=NR_q(j).$ Therefore, Algorithm 2 outputs an optimal solution for the instance.

\subsection{A refined algorithm}

A limitation of Algorithm 2 is that it may select very few vertices as outliers. To overcome this shortcoming, we present a refined algorithm which uses a binary search on a parameterized version of Algorithm 2. The refined algorithm is formally described in Algorithm 3. Compared with Algorithm 2, an additional parameter $l$ needs to be given as an input in Algorithm 3. The integer $l$ limits the number of iterations for searching a solution with a better outlier-related fairness ratio. The more steps of the iterations, it is more likely that we obtain a smaller ratio.


\begin{algorithm} \label{alg3}
\caption{: A Refined Algorithm for the IF$k$CO.}
{ {\bf Input:} An  IF$k$CO instance $\mathcal{I}_{{\rm IF}k{\rm CO}}= (V,  \{d_{ij}\}_{i, j \in V}, k, q)$, an integer $l \geq 0$.}\\
{ {\bf Output:} A feasible solution $(S, O, \sigma)$ for the instance  $\mathcal{I}_{{\rm IF}k{\rm CO}}$.}

\begin{description}

\item[Step 1]
Use Algorithm 2 to solve $\mathcal{I}_{{\rm IF}k{\rm CO}}$ and obtain a solution $(S_b, O_b, \sigma_b)$.
\item[Step 2]
Initially set $t:=0$, $\beta_1:=1$, $\beta_2 := 2$, $\beta:=\beta_1$ and $(S, O, \sigma):=(S_b, O_b, \sigma_b)$.
\item[Step 3]
{\bf While} $t < l$ {\bf do}

\begin{description}
\item \begin{description}
\item ~~~~Set $P_\beta:=V$, $S_\beta:=\emptyset$.
\item {\bf While} $P_\beta \not= \emptyset$, $|S_\beta| < k $ {\bf do}

\begin{description}
\item Find a vertex $s \in P_\beta$ such that

\begin{description}
\item  $s := \arg \min \limits_{i \in P_\beta} NR_q(i).$
\end{description}

\item Update $S_\beta:=  S_\beta \cup \{s\}$, $P_\beta:= \{ i \in P_\beta: d_{is} > \beta \cdot NR_q(i)\}$.

\end{description}

\item Set $O_\beta := P_\beta$, $\sigma_\beta(i):= \arg \min_{h \in S_\beta} d_{ih}$ for each $i \in V \setminus O_\beta$.
\item {\bf If} $|O_\beta| > q$ {\bf then}
\begin{description}
\item Update $\beta_1 := \beta$, $\beta := (\beta_1 + \beta_2) / 2$,   $t:=t+1$.
\end{description}
\item {\bf If} $|O_\beta| \leq q$ {\bf then}
\begin{description}
\item Update $(S, O, \sigma) :=(S_\beta, O_\beta, \sigma_\beta)$, $\beta_2 := \beta$, $\beta := (\beta_1 + \beta_2) / 2$,  $t:=t+1$.
\end{description}

\end{description}
\end{description}

\item[Step 4] Output $(S, O, \sigma)$ as the solution for the instance  $\mathcal{I}_{{\rm IF}k{\rm CO}}$.
\end{description}

\end{algorithm}

The following lemma gives the feasibility of the solution $(S, O, \sigma)$ obtained from Algorithm 3.
\begin{lem} \label{lem3}
Algorithm 3 outputs a feasible solution $(S, O, \sigma)$ for any {\rm IF}$k${\rm CO} instance $\mathcal{I}_{{\rm IF}k{\rm CO}}$.
\end{lem}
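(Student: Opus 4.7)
The plan is to show that every value ever assigned to the triple $(S, O, \sigma)$ during Algorithm 3 is feasible, and therefore so is its output. Recall that feasibility only requires $|S| \leq k$ and $|O| \leq q$.

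First I would handle the initialization in Step 2. The triple $(S, O, \sigma)$ is set to $(S_b, O_b, \sigma_b)$, which is the output of Algorithm 2 on $\mathcal{I}_{{\rm IF}k{\rm CO}}$. By Lemma \ref{lem2}, this output is already feasible, so the triple held at the start of Step 3 satisfies both cardinality bounds.

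Next I would examine the only other place where $(S, O, \sigma)$ can change: the update $(S, O, \sigma) := (S_\beta, O_\beta, \sigma_\beta)$ inside the \textbf{If} $|O_\beta| \leq q$ branch of Step 3. The $|O| \leq q$ bound is immediate from the branch condition, so the work is to verify $|S_\beta| \leq k$. This follows directly from the control of the inner \textbf{While} loop: the condition $|S_\beta| < k$ is checked before every iteration that appends a new vertex to $S_\beta$, so just before the append we have $|S_\beta| \leq k - 1$, and hence $|S_\beta| \leq k$ when the loop terminates. Consequently, every update performed in Step 3 replaces $(S, O, \sigma)$ with a feasible triple.

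Combining these two observations by a trivial induction over the iterations of the outer \textbf{While} loop, the triple $(S, O, \sigma)$ is feasible at every point in the execution, and in particular the value returned in Step 4 is feasible. I do not expect any real obstacle here; the argument is essentially a loop-invariant check, and the only subtlety is to notice that Step 3 never overwrites $(S, O, \sigma)$ unless the $|O_\beta| \leq q$ test passes, so the feasibility established at initialization via Lemma \ref{lem2} is never lost.
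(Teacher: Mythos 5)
Your proposal is correct and follows essentially the same route as the paper's proof: initialize with the feasible Algorithm~2 output (Lemma~\ref{lem2}) and observe that Step~3 only overwrites $(S,O,\sigma)$ when $|O_\beta|\leq q$, with $|S_\beta|\leq k$ enforced by the inner loop condition. Your version just spells out the loop-invariant details that the paper leaves implicit.
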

\begin{proof}
Recall that a solution $(S, O, \sigma)$ is feasible if $|S| \leq k$ and $|O| \leq q$. Initially, Algorithm 3 sets $(S, O, \sigma)$ as $(S_b, O_b, \sigma_b)$ which is a feasible solution obtained from Algorithm 2. Then, the principle of updating the solution $(S, O, \sigma)$ in Step 3 guarantees that the currently updated solution $(S_\beta, O_\beta, \sigma_\beta)$ must satisfy that $|S_\beta| \leq k$ and $|O_\beta| \leq q$. Therefore, the solution obtained from Algorithm 3 is a feasible solution.
\qed
\end{proof}


\begin{lem} \label{l3a}
The outlier-related fairness ratio of the solution $(S, O, \sigma)$ obtained from Algorithm 3 for any {\rm IF}$k${\rm CO} instance $\mathcal{I}_{{\rm IF}k{\rm CO}}$  is at most $2$, i.e.,
$$\alpha(S, O, \sigma) \leq 2.$$

\end{lem}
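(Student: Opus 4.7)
The plan is to combine Lemma \ref{l2a} with an invariant-style argument tracking the binary-search parameter $\beta$. I would first observe that Algorithm 3 initialises $(S,O,\sigma):=(S_b,O_b,\sigma_b)$, which is the output of Algorithm 2 on $\mathcal{I}_{{\rm IF}k{\rm CO}}$; by Lemma \ref{l2a}, this starting value already satisfies $\alpha(S,O,\sigma)\leq 2$. Hence it suffices to show that every subsequent assignment $(S,O,\sigma):=(S_\beta,O_\beta,\sigma_\beta)$ executed inside Step 3 also has outlier-related fairness ratio at most $2$.

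Next I would establish the invariant $1\leq \beta_1\leq \beta\leq \beta_2\leq 2$ throughout the while-loop. Initially we have $\beta_1=1$, $\beta_2=2$, $\beta=\beta_1=1$. Each iteration either replaces $\beta_1$ by the current $\beta$ (case $|O_\beta|>q$) or replaces $\beta_2$ by the current $\beta$ (case $|O_\beta|\leq q$); in both cases the new $\beta$ is reset to $(\beta_1+\beta_2)/2$, so $\beta\in[\beta_1,\beta_2]\subseteq[1,2]$ is preserved. In particular, whenever the solution is updated, the corresponding parameter satisfies $\beta\leq 2$.

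Now I would mirror the argument of Lemma \ref{l2a} at the generic parameter $\beta$. The inner while-loop of Step 3 maintains $P_\beta:=\{i\in P_\beta : d_{is}>\beta\cdot NR_q(i)\}$ after every selection, so at termination every vertex $i\in V\setminus O_\beta$ (i.e., every vertex that has been removed from $P_\beta$) satisfies $d_{is}\leq \beta\cdot NR_q(i)$ for some $s\in S_\beta$. Since $\sigma_\beta(i):=\arg\min_{h\in S_\beta}d_{ih}$, this gives
\[
d_{\sigma_\beta(i)i}\;\leq\; \beta\cdot NR_q(i) \;\leq\; 2\cdot NR_q(i)\qquad\text{for every }i\in V\setminus O_\beta,
\]
hence $\alpha(S_\beta,O_\beta,\sigma_\beta)\leq \beta\leq 2$ whenever such a reassignment occurs.

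Combining these ingredients by induction on the number of completed iterations, the invariant $\alpha(S,O,\sigma)\leq 2$ is preserved from the initial assignment through every update, and therefore holds for the final output. The main technical point is really just the parameter-bound invariant; once $\beta\leq 2$ is guaranteed, the fairness bound follows immediately from a direct reuse of the Lemma \ref{l2a} reasoning with $2$ replaced by $\beta$, so I expect no substantive obstacle beyond bookkeeping.
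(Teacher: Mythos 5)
Your proposal is correct and follows essentially the same route as the paper's own proof: handle the case of no update via Lemma \ref{l2a}, and for updated solutions bound the ratio by the current $\beta$, which the binary-search update rule keeps in $[1,2]$. Your explicit invariant $1\leq\beta_1\leq\beta\leq\beta_2\leq 2$ is just a slightly more detailed bookkeeping of what the paper states as ``the update principle of $\beta$ guarantees $\beta_f\in[1,2]$.''
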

\begin{proof}
If Step 3 of Algorithm 3 does not update the solution $(S, O, \sigma)$, we output $(S_b, O_b, \sigma_b)$ as the final solution.
Therefore, from Lemma 3, we have that
\bea
\alpha(S, O, \sigma)&=&\alpha(S_b, O_b, \sigma_b) \leq 2.\label{th7a}
\eea
Now consider the case that Step 3 of Algorithm 3 updates the solution $(S, O, \sigma)$. Assume that the final updated solution is $(S_{\beta_f}, O_{\beta_f}, \sigma_{\beta_f})$. For any $i \in V \setminus O = V \setminus O_{\beta_f}$, from Step 3, we have that
$$d_{\sigma(i)i}= d_{\sigma_{\beta_f}(i)i} \leq \beta_f NR_q(i).$$
That means,
$$\frac{d_{\sigma(i)i}}{NR_q(i)} \leq \beta_f {\rm ~for~any}~ i \in V\setminus O.$$
Thus, we obtain that
\bea
&&\alpha(S, O, \sigma)= \max \limits_{i \in V \setminus O} \frac{d_{\sigma(i)i}}{NR_q(i)} \leq \beta_f \leq 2.\label{th7b}
\eea
The last inequality follows by the update principle of $\beta$ in Step 3, which guarantees that $\beta_f \in [1, 2]$.
Combining inequalities \reff{th7a} and \reff{th7b}, we complete the proof of this lemma.
\qed
\end{proof}

From Lemmas \ref{l2b} and \ref{l3a}, for the solution $(S, O, \sigma)$ obtained from Algorithm 3 and the optimal solution $(S^*, O^*, \sigma^*)$, we have that $$\alpha(S, O, \sigma) \leq 4 \cdot \alpha(S^*, O^*, \sigma^*) = 4 \cdot OPT_{\mathcal{I}_{{\rm IF}k{\rm CO}}},$$ which implies the following main result of Algorithm 3.
\begin{thm}\label{thm2}
Algorithm 3 is a $4$-approximation algorithm for the {\rm IF}$k${\rm CO}.
\end{thm}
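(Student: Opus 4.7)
The plan is essentially to assemble the two one-sided bounds that have already been established into the desired approximation guarantee. Feasibility is handled separately by Lemma~\ref{lem3}, which confirms that Algorithm~3 always returns some $(S,O,\sigma)$ with $|S|\leq k$ and $|O|\leq q$, so nothing further is needed on that front.

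For the approximation ratio, I would first invoke Lemma~\ref{l3a}, which gives an unconditional upper bound $\alpha(S,O,\sigma)\leq 2$ on the outlier-related fairness ratio of the solution produced by Algorithm~3. This bound does not depend on the specific instance, and in particular does not depend on the value of $l$: it follows from the invariant that Algorithm~3 either retains the Algorithm~2 solution (which already satisfies the bound by Lemma~\ref{l2a}) or overwrites it with a solution generated at a search parameter $\beta_f\in[1,2]$, for which every served vertex satisfies $d_{\sigma(i)i}\leq\beta_f\cdot NR_q(i)\leq 2\cdot NR_q(i)$.

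Second, I would invoke Lemma~\ref{l2b}, which is a structural lower bound on any optimal solution: $\alpha(S^*,O^*,\sigma^*)\geq 1/2$. Dividing the first inequality by the second yields
\[
\frac{\alpha(S,O,\sigma)}{\alpha(S^*,O^*,\sigma^*)} \;\leq\; \frac{2}{1/2} \;=\; 4,
\]
so $\alpha(S,O,\sigma)\leq 4\cdot OPT_{\mathcal{I}_{{\rm IF}k{\rm CO}}}$, as required.

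Because both supporting lemmas are already in hand, there is no genuine obstacle here; the proof is purely a two-line combination. The only thing I would take care about is explicitly noting that Lemma~\ref{l3a} applies uniformly, i.e., it covers both the branch where the binary search in Step~3 never accepts any $\beta<2$ (so the output coincides with the Algorithm~2 solution and Lemma~\ref{l2a} kicks in) and the branch where the search does update $(S,O,\sigma)$ at some terminal $\beta_f\in[1,2]$. Once that distinction is stated cleanly, invoking Lemma~\ref{l2b} to divide through completes the argument.
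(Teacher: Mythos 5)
Your proposal is correct and matches the paper's argument exactly: the paper likewise combines Lemma~\ref{l3a} (the unconditional bound $\alpha(S,O,\sigma)\leq 2$ for Algorithm~3's output) with Lemma~\ref{l2b} (the lower bound $\alpha(S^*,O^*,\sigma^*)\geq 1/2$) to conclude $\alpha(S,O,\sigma)\leq 4\cdot OPT_{\mathcal{I}_{{\rm IF}k{\rm CO}}}$, with feasibility supplied by Lemma~\ref{lem3}. Your extra remark about the two branches of the binary search is already handled inside the paper's proof of Lemma~\ref{l3a}, so nothing further is needed.
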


Intuitively, it would be better to employ a parameterized algorithm for obtaining a solution with better outlier-related fairness ratio. We provided a parameterized version of Algorithm 1 in Algorithm 4. In the next section, we compare the performance of Algorithm 3 and 4 on a real-world dataset.

\begin{algorithm} \label{alg4}
\caption{: A Parameterized Version of Algorithm 1.}
{ {\bf Input:} An  IF$k$CO instance $\mathcal{I}_{{\rm IF}k{\rm CO}}= (V,  \{d_{ij}\}_{i, j \in V}, k, q)$, an integer $l \geq 0$.}\\
{ {\bf Output:} A feasible solution $(S, O, \sigma)$ for the instance  $\mathcal{I}_{{\rm IF}k{\rm CO}}$.}

\begin{description}

\item[Step 1]
Use Algorithm 1 to solve $\mathcal{I}_{{\rm IF}k{\rm CO}}$ and obtain a solution $(S_a, O_a, \sigma_a)$.
\item[Step 2]
Initially set $t:=0$, $\beta_1:=1$, $\beta_2 := 2$, $\beta:=\beta_1$ and $(S, O, \sigma):=(S_a, O_a, \sigma_a)$.
\item[Step 3]
{\bf While} $t < l$ {\bf do}

\begin{description}
\item \begin{description}
\item ~~~~Set $P_\beta:=V$, $S_\beta:=\emptyset$.
\item {\bf While} $P_\beta \not= \emptyset$ {\bf do}

\begin{description}
\item Find a vertex $s \in P_\beta$ such that

\begin{description}
\item  $s := \arg \min \limits_{i \in P_\beta} NR(i).$
\end{description}

\item Update $S_\beta:=  S_\beta \cup \{s\}$, $P_\beta:= \{ i \in P_\beta: d_{is} > \beta \cdot NR(i)\}$.

\end{description}

\item Set $O_\beta := \emptyset$, $\sigma_\beta(i):= \arg \min_{h \in S_\beta} d_{ih}$ for each $i \in V$.
\item {\bf If} $|S_\beta| > k$ {\bf then}
\begin{description}
\item Update $\beta_1 := \beta$, $\beta:=(\beta_1 + \beta_2) / 2$, $t:=t+1$.
\end{description}
\item {\bf If} $|S_\beta| \leq k$ {\bf then}
\begin{description}
\item Update $(S, O, \sigma) :=(S_\beta, O_\beta, \sigma_\beta)$, $\beta_2 := \beta$, $\beta:=(\beta_1+\beta_2) / 2$, $t:=t+1$.
\end{description}

\end{description}
\end{description}

\item[Step 4] Output $(S, O, \sigma)$ as the solution for the instance  $\mathcal{I}_{{\rm IF}k{\rm CO}}$.
\end{description}

\end{algorithm}

\section{Experiments}

In this section, we provide the experimental results of Algorithm 3 running on both synthetic datasets and real-world datasets to illustrate its effectiveness. The environment for the
experiments is Intel(R) Core(TM) CPU i7-6700 @3.40GHz with 8GB memory.

\subsection{Synthetic Datasets}
Theoretically, we prove that both Algorithm 2 and Algorithm 3 have a same approximation ratio of 4. However, Algorithm 3 has a much better performance in experiments.
In this subsection, we mainly test the proposed algorithms on synthetic datasets.

We randomly generate nine IF$k$CO instances with different settings of  $n$, $k$ and $q$. The nine instances are divided into three groups. Each group of the three aims to find out the effect of one parameter on the outlier-related fairness ratios of the solutions of Algorithm 3. The details of the settings are:
\begin{itemize}
\item  Group 1: Three randomly generated IF$k$CO instances, in which $k=20$, $q=50$ and $n=200, 1000, 5000$, respectively;
\item  Group 2: Three randomly generated IF$k$CO instances, in which $n=1000$, $q=50$ and $k=5, 20, 100$, respectively;
\item  Group 3: Three randomly generated IF$k$CO instances, in which $n=1000$, $k=20$, and  $q=20, 50, 100$, respectively.
\end{itemize}
For the instances in Group 1, 2 and 3, we show the continuous changing lines of the number of outliers selected by Algorithm 3 with respect to different values of $\beta$ in Fig. \ref{en}, \ref{ek} and \ref{eq}, respectively. The corresponding outlier-related fairness ratios of the solutions obtained from Algorithm 3 are also shown in Fig. 2. From our intuition, we belive that all the changing lines tend to decrease. It can be seen from Fig. 2 that in general they do, but some of the changing lines are not strictly decreasing.
Here are some specific observations for each group.
\begin{itemize}
\item  For Group 1: The positions of the three lines are consistent with our expectation. The larger $n$ is, the higher the position of the line, since for the same $\beta$ a larger $n$ would cause Algorithm 3 to output more outliers.
\item  For Group 2: The positions of the three lines are unconventional. We intuitively think that for the same $\beta$, a larger $k$ would cause Algorithm 3 to output a smaller number of outliers. But for the randomly generated instance where $k=5$, whatever $\beta$ is, the number of its outliers obtained from Algorithm 3 is always the smallest.
\item  For Group 3: The positions of the three lines are reasonable. The larger $q$ is, the smaller the corresponding outlier-related radiuses of all the vertices are. Smaller outlier-related radiuses would cause Algorithm 3 to output more outliers.
\end{itemize}
More remarkably, we find that Algorithm 3 is a very well-performed algorithm. For all the tested instances, the maximum outlier-related fairness ratio obtained from Algorithm 3 is only $1.31$, which is far more below the theoretical bound of 2.

\begin{figure}[h]
\centering
     \subfigure[The effect of $n$. ]
     {\label{en} 
    \includegraphics[height=4.2cm]{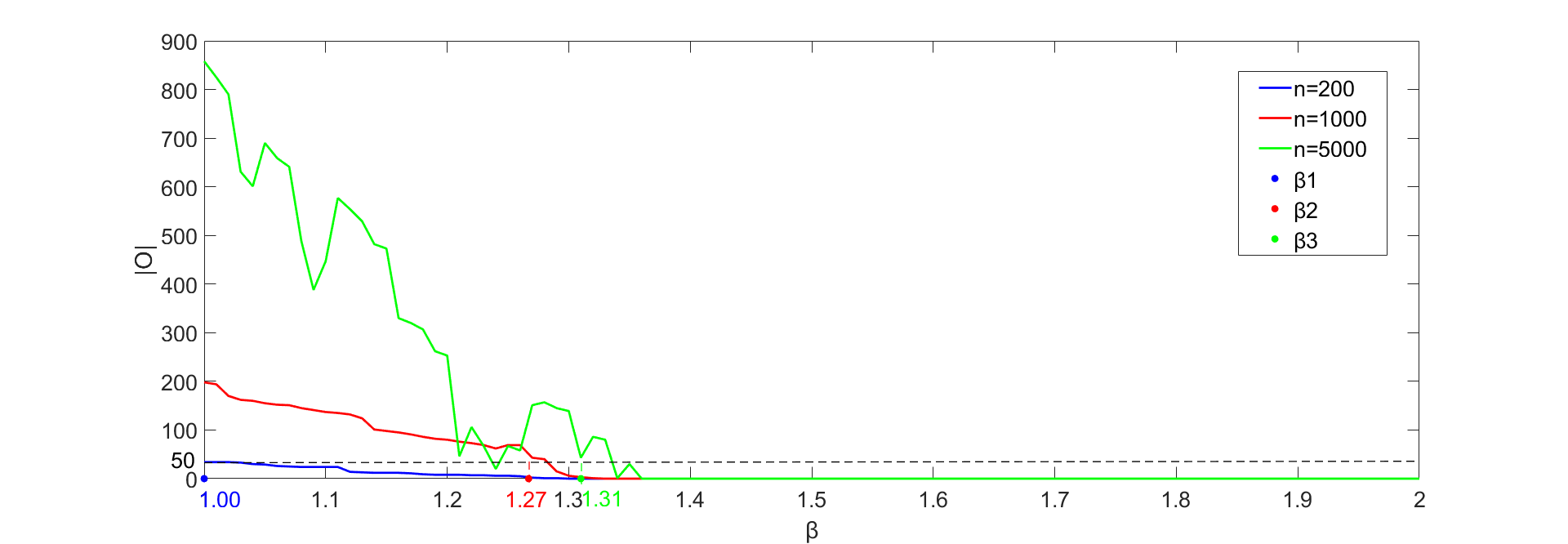}}

     \subfigure[The effect of $k$.]
     {\label{ek} 
    \includegraphics[height=4.2cm]{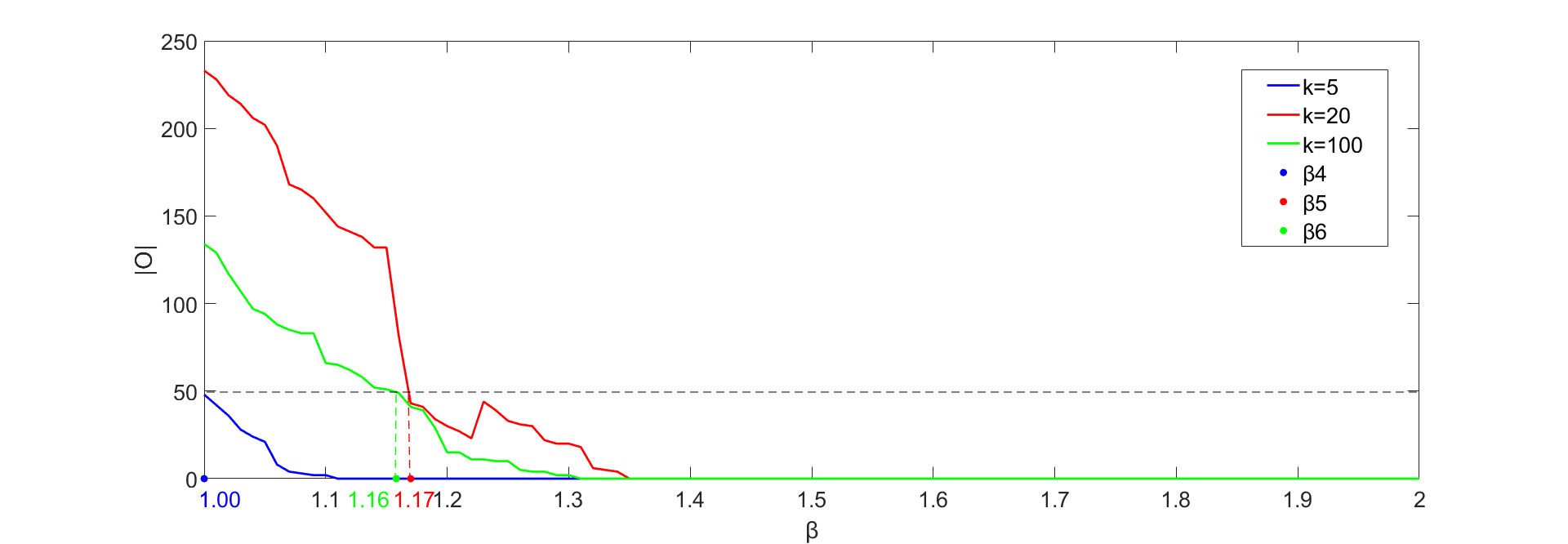}}

     \subfigure[The effect of $q$.]
  {\label{eq} 
    \includegraphics[height=4.2cm]{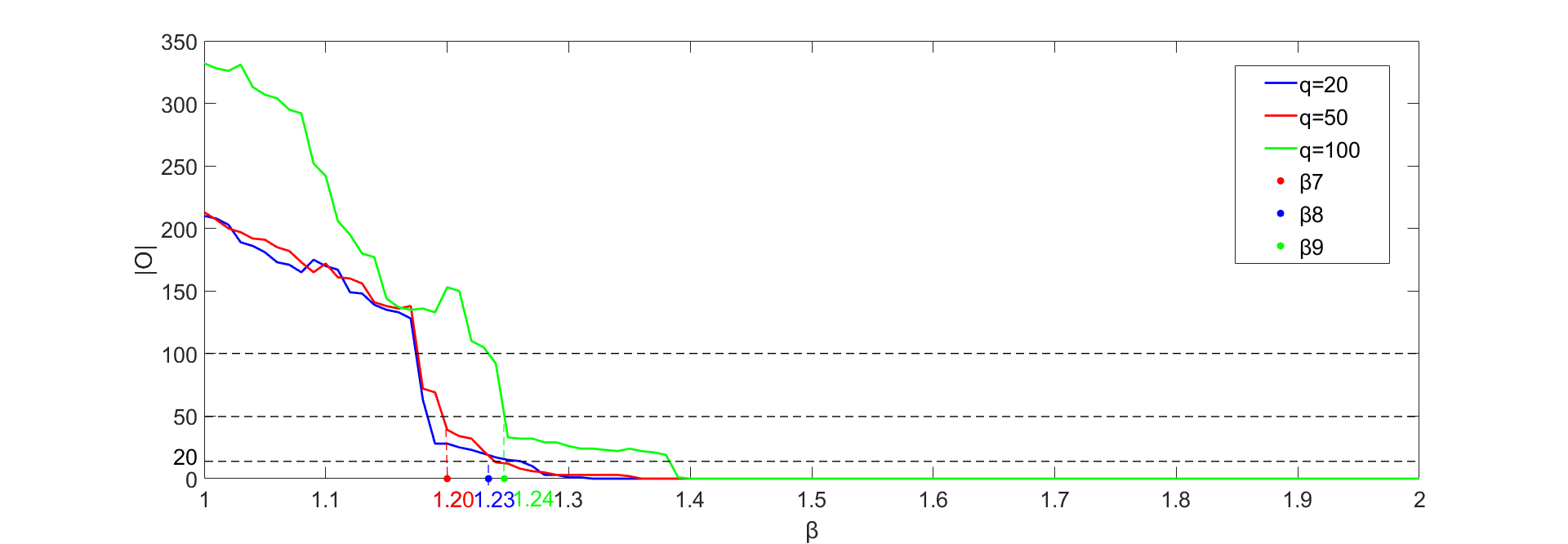}}
\caption{An illustration of the effect of the input of the IF$k$CO on the output of Algorithm 3. Each instance is represented by a colored line, which reflects the change of the number of outliers selected by Algorithm 3 with respect to different values of $\beta$. The colored dots represent the corresponding outlier-related fairness ratios of the solutions obtained from Algorithm 3.}
\end{figure}

\subsection{Real-world Datasets}
In this subsection, we test Algorithm 3 and 4 on the Shenzhen POI (Point of Interest) dataset collected from the open API of Gaode Maps \cite{link}. The target POI type contains $2936$ points, and since the POI type does not affect the results by any means, we hide it throughout this paper. The distances between any two points are measured in Euclidean distance after mapping the latitude and longitude of all the points onto a plane.

Recall that Algorithm 3 and 4 are the parameterized versions of Algorithm 2 and 1, respectively. It can be seen that Algorithm 2 performs much better than Algorithm 1 for Example 1. Thus intuitively, we would expect that the performance of Algorithm 3 is probably better than Algorithm 4 for the same instance.
 We show the centers selected by Algorithm 3 and 4 in Fig. \ref{alg1t3} and \ref{alg2t4}, respectively.
It turns out that Algorithm 3 is more likely to locate a center in dense areas compared with Algorithm 4. In other words, the points that are not likely to be outliers are of more importance in Algorithm 3 than in Algorithm 4. As a consequence, Algorithm 3 tends to obtain a smaller outlier-related fairness ratio than Algorithm 4.
This phenomenon makes sense because Algorithm 3 and Algorithm 4 keep searching the vertex $i$ with a minimum radius of $NR_q(i)$ and $NR(i)$ as a center, and a vertex $i$ in dense area is more likely to have the minimum $NR_q(i)$ than $NR(i)$.

\begin{figure}[h]
\centering
     \subfigure[The centers selected by Algorithm 3.]
     {\label{alg1t3} 
    \includegraphics[height=5.8cm]{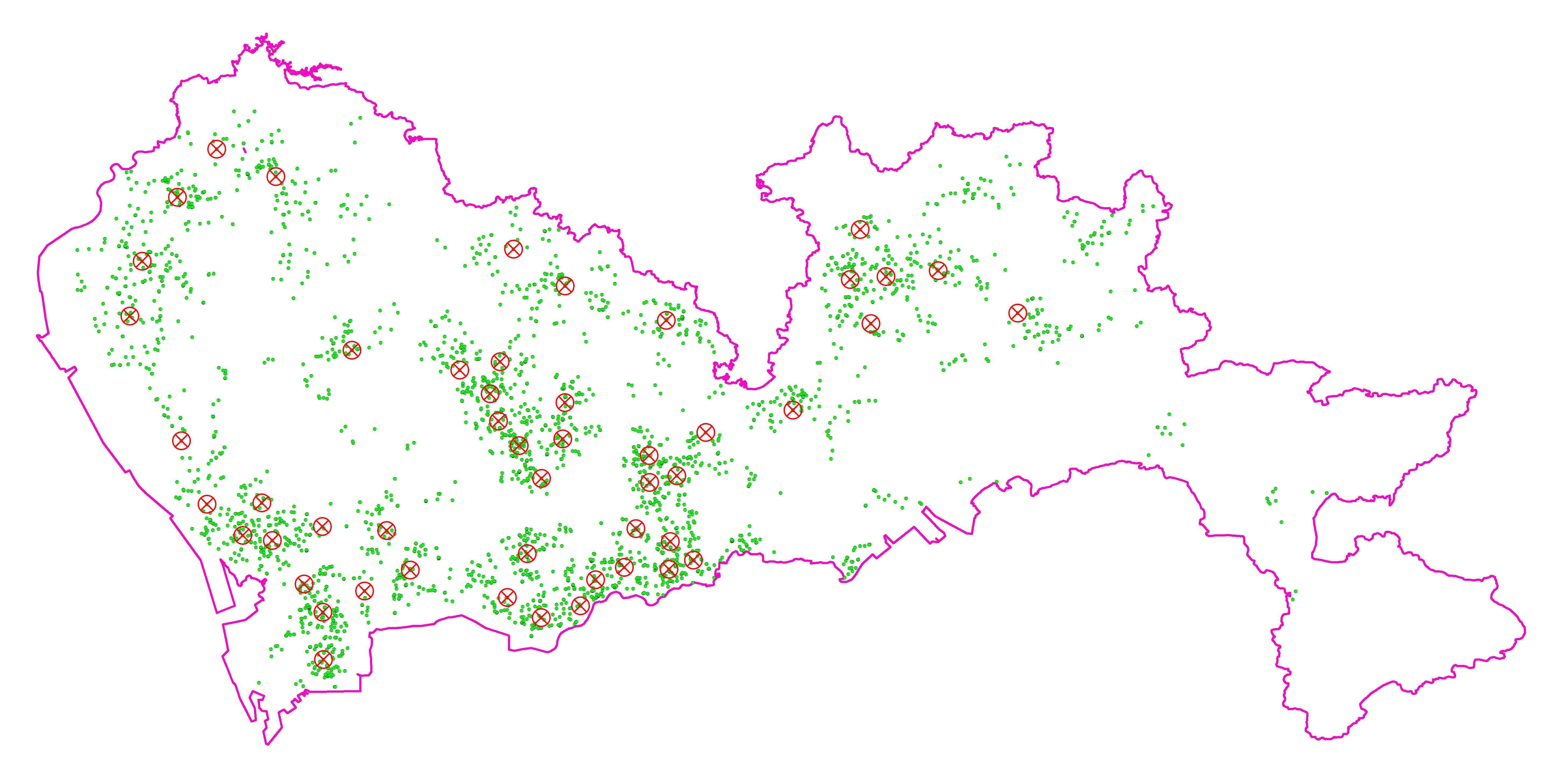}}
     \subfigure[The centers selected by Algorithm 4.]
  {\label{alg2t4} 
    \includegraphics[height=5.8cm]{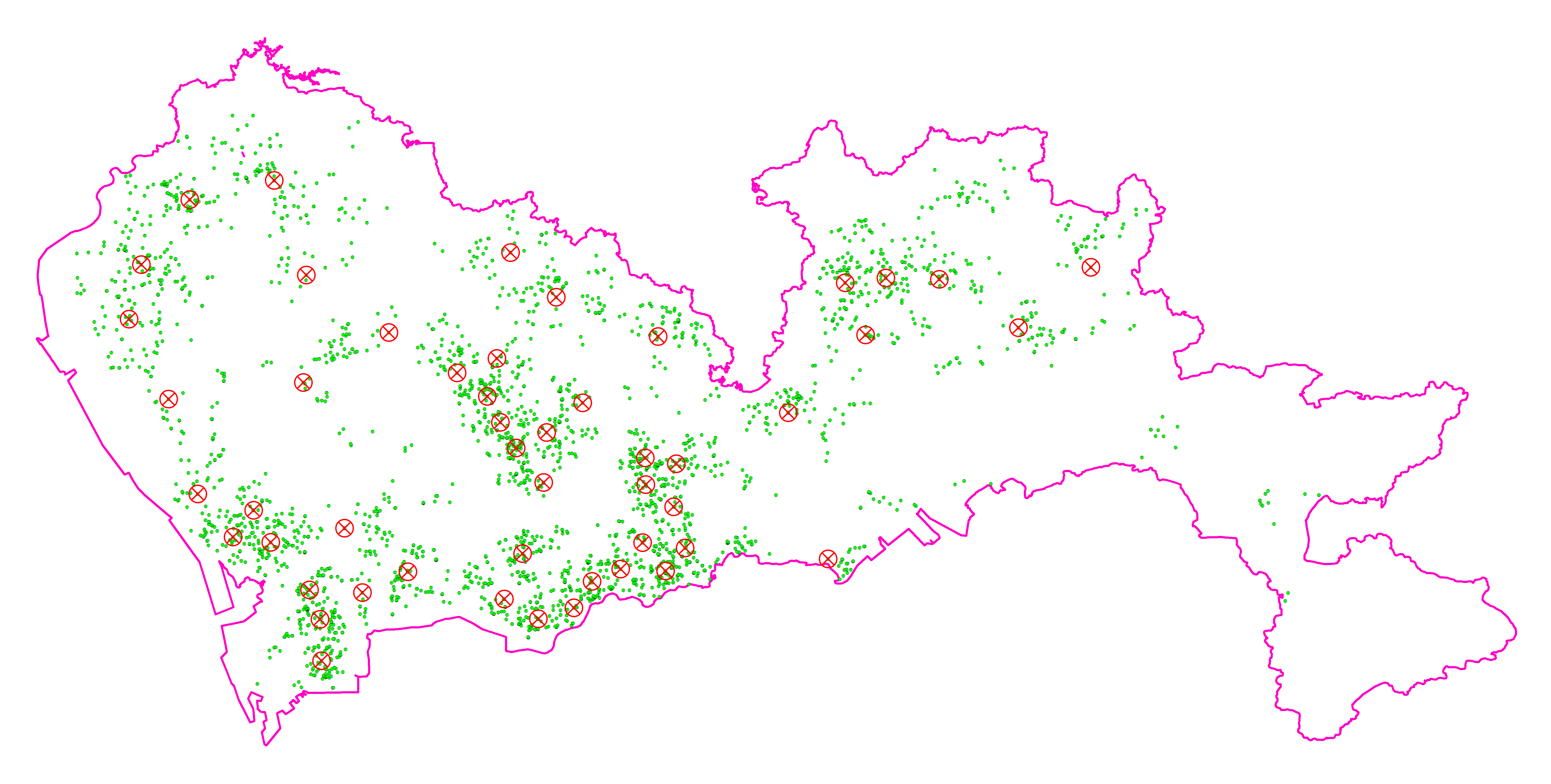}}
\caption{An illustration of an actual instance in Shenzhen, China. The small green circles are the $2936$ points of interest. The big red circles are the selected centers.}
\end{figure}

\section{Discussions}
In this paper, we propose and investigate the IF$k$CO, which overcomes the shortcoming of IF$k$C that the overall clustering quality may be effected by a few of isolated vertices.
As our main contribution, several approximation algorithms for the proposed problem are presented, with a provable $4$ approximation ratio. Despite the theoretical performance guarantee, the experiments on both synthetic and real-world datasets show that the refined algorithm usually outputs a feasible solution with performance significantly better than the approximation ratio of $4$.
To introduce the individual fairness and outlier detection into other clustering problems like the $k$-median and $k$-means are very interesting directions in the future.


\bibliographystyle{plain}
\bibliography{jogobibfile}

\end{document}